\newcommand{\KK}{\mathbb{K}}
\newcommand{\I}{\mathcal{I}}
\newcommand{\mo}[1]{\lvert  #1 \rvert}
\newcommand{\mat}[2]{#1_{[#2]} }
\newcommand{\nmat}[3]{N_{#1}(#2,#3) }
\newcommand{\Z}{\mathbb{Z}}
\theoremstyle{definition}
\newtheorem{definizione}{Definition}[section]
\theoremstyle{definition}
\newtheorem{teorema}[definizione]{Theorem}
\theoremstyle{definition}
\newtheorem{proposizione}[definizione]{Proposition}
\theoremstyle{definition}
\newtheorem{lemma}[definizione]{Lemma}
\theoremstyle{definition}
\newtheorem{ex}[definizione]{Example}
\theoremstyle{definition}
\newtheorem{rem}[definizione]{Remark}
\theoremstyle{definition}
\newtheorem{nota}[definizione]{Notation}
\theoremstyle{definition}
\newtheorem{cor}[definizione]{Corollary}
\theoremstyle{definition}
\numberwithin{equation}{section}
\title{The weight distribution of codes over finite chain rings}
\begin{document}

\author[G. Cavicchioni]{Giulia Cavicchioni}
\address{Department of Mathematics\\
University of Trento \\
Italy
}
\email{giulia.cavicchioni@unitn.it}

\author[A. Meneghetti]{Alessio Meneghetti}
\address{Department of Mathematics\\
University of Trento \\
Italy
}
\email{alessio.meneghetti@unitn.it}

\subjclass[2020]{94B05,13M99}

\keywords{Ring-linear code, Weight distribution}

\maketitle
 \begin{abstract}\footnotesize
 In this work, we determine new linear equations for the weight distribution  of linear codes over finite chain rings. The identities are determined  by  counting  the number of some special submatrices of the parity-check matrix of the code.
Thanks to these relations we  are able to compute the full weight distribution of codes with small Singleton defects, such as MDS, MDR and AMDR codes. 
\end{abstract}
\section{Introduction}\label{intro}
Ring-linear coding theory has been widely studied because of its theoretical and practical interest. On one hand, ring-linear codes are relevant from an algebraic perspective: as shown in  \cite{kerdock}, some optimal but non-linear binary codes can be represented as linear codes over $\Z/4\Z$ endowed with the Lee metric. On the other hand, ring-linear codes have received attention in cryptographic community. The recent effort among cryptographers to obtain secure post-quantum ciphers \cite{bike,crystals,bernstein,mceliece,hqc,nist} led to an increase in the interest in computationally hard algebraic problems, and an interested reader can refer to \cite{berlekamp,meneghetti2022,vardy}  for more details. Code-based cryptography is one of the most studied and promising areas in the  post-quantum framework. However, due to the necessity of reducing the public key size associated to a code-based cryptosystem, there has been interest in exploring different ambient spaces and metrics other then vector spaces over finite fields equipped with the Hamming metric. For example, codes over finite rings equipped with the Lee metric may decrease the-public key size of the cryptosystems; for further details see \cite{horlemann, persichetti,weger}.\\  Understanding the minimum distance of a code is computationally hard and it is one of the main problem in Coding Theory. In 1997 Vardy proved that, given a basis of a code, determining precisely the minimum distance of a linear code is NP-hard \cite{vardy}. Hence this computational problem, as well as several related questions, is  linked to the security of post-quantum cryptographic protocols.  \\
Even the problem of calculating the weight distribution of a linear code, which implies the determination of the minimum distance, is NP-hard. In this paper we treat the problem of computing the  weight distribution of linear codes over finite chain rings equipped with the Hamming metric. 
In classical coding theory the most fundamental result about weight distributions are the MacWilliams identities, which express how the weight enumerators of a linear code and its dual relate to each other. Several authors
have generalized this work in  different directions. For example, a MacWilliams theorem for codes over finite Frobenius rings was given by Wood in 1999 \cite{wood99}.\\
Here, we provide new linear equations for the weight distribution of ring-linear codes  by  counting  the number of some special submatrices of the parity-check matrix of the code. This task is  certainly  as difficult as the original one; however, it allows to investigate codes having special structure in their parity-check matrix. The  provided equations and MacWilliams identities  seems to be independent, but there could be a possible link in between this equations and some variant of MacWilliams identities. \\
This paper is organized as follows. In Section \ref{preliminiries} we recall some basics on  linear codes over finite fields. In Section \ref{GeneralitiesLinC} we introduce ring-linear codes; we investigate the structure of the parity-check matrix and the  weight distribution of a linear code. 
In Section \ref{Linearwd} we derive new relations for the weight distribution of ring-linear codes; we discuss the optimality of the result. The obtained  formula is a modification of the formula given in   \cite[Proposition 5]{meneghetti2021} for  linear codes over finite fields and specialised in \cite{pellegrini} for Hermitian codes. In Section \ref{appwdr} we apply our  formula to verify the known results about the distribution of MDS codes. Moreover we derive the weight distribution formula for MDR and AMDR codes.  Finally, in Section  \ref{Relation with MacWilliams identities}, we discuss the connection between MacWilliams identities and the provided relations.


\section{Preliminaries on Linear codes over finite fields}\label{preliminiries}
In its most general setting, Coding Theory is the study of discrete sets equipped with a metric. The most studied case is that of algebraic varieties living in vector spaces over finite fields, and the metric is the Hamming metric. In this framework, a (linear) code $C$ is a vector subspace of dimension $k$ of $\left(\mathbb{F}_q\right)^n$, where the elements of the code are called codewords and the parameters $n$ and $k$ are respectively known as the length and the dimension of $C$. The Hamming metric, also known as Hamming distance, is a discrete metric counting the number of non-zero coordinates, namely,
$$
\mathrm{d}(v,w)=\left|\left\{i\mid v_i\neq w_i, \;1\leq i\leq n\right\}\right|\;,
$$
for any $v=(v_1, \ldots ,v_n)$ and $w=(w_1,\ldots,w_n)$ in $\left(\mathbb{F}_q\right)^n$.
\\
Notice that in this work we consider the elements of vector spaces and modules to be row vectors, a standard notation in Coding Theory. If $v$ is any (row) vector, then its transpose $v^{\top}$ is a column vector.
\\
The third most important parameter of a code is the so-called minimum distance $d$, which is the minimum among the Hamming distances of any pair of distinct codewords, i.e.
$$
d=\min_{c_1,c_2\in C}\mathrm{d}(c_1,c_2)\;,
$$ and it coincides with the minimum weight of a codeword. 
The importance of the minimum distance is related to the capability of codes to correct errors. If we are presented with a vector $v$ that should be a codeword $c$ of a given code $C$, even if its coordinates are corrupted (hence $v\notin C$), then we can safely reconstruct $c$ from $v$ provided that the number of erroneous coordinates of $v$ is bounded by $\lfloor\frac{d-1}{2}\rfloor$. 
\\The weight distribution of a code of length $n$ specifies the number of codewords of each possible weight $0, 1, \dots, n$. Even if  the weight distribution
does not in general uniquely determine a code, it gives important information: 
in addition to providing the correction capability of a  code,  it allows  to calculate the probability of undetected errors  (see \cite[Chapter 2]{torleiv}). 
\section{Generalities on linear codes over finite chain rings}\label{GeneralitiesLinC}

	A finite ring with unity $ 1\neq0 $ is called a left (resp. right) chain ring if its left (resp. right) ideals are linearly ordered by inclusion. 
Note that a finite chain ring is a  local ring where  all the ideals are principal.\\ Throughout the paper let $R $ be a finite commutative chain ring. Let $ \gamma $ be the  generator of the maximal ideal and let $s$ be its the nilpotency index.   Let $ \KK $ denote the residue field with $p$ elements $ R/\gamma R $. 

\begin{definizione}  A  \emph{linear code of length $ n $ in the alphabet $ R $} is a submodule of $ R^n $. The free module $R^n$ is called the \emph{ambient space} of the code.
\end{definizione}  
\begin{definizione}
The\emph{ Hamming weight} of an element $ c=(c_1,\dots,c_n)\in R^n $ is the number $ w(c) $ of non-zero entries of $ c $. 
\end{definizione}


\begin{definizione}
A matrix $ G $ is called a \emph{generator matrix} for the code  $ C $ over $ R $ if the rows of $ G $ span $ C $ and none of them can be written as a linear combination of the other rows of $ G $. 
\end{definizione}

As shown in \cite{norton}, any linear code over a finite chain ring has a generator matrix. In our framework it is convenient to work with a generator matrix in standard form.
\begin{proposizione}{\cite[Proposition 3.2]{norton}}\label{gmstand}
Let $C$ be a linear code in $R^n$. $C$ is permutation equivalent to a code having the following  generator matrix in standard form:
	\begin{equation*}\label{genma}
G=\begin{bmatrix}
I_{k_0}&A_{0,1} &A_{0,2} &A_{0,3}&\dots& A_{0,s-1}& A_{0,s} \\
0 &\gamma I_{k_1} & \gamma A_{1,2}& \gamma A_{1,3}& \dots& \gamma A_{1,s-1}&\gamma A_{1,s}\\
0 &0 & \gamma^2 I_{k_2} & \gamma^2 A_{2,3}& \dots& \gamma^2 A_{2,s-1}&\gamma^2 A_{2,s}\\

\vdots &  \vdots& \vdots & \vdots && \vdots &\vdots \\
0 &0 & 0 & 0 & \dots & \gamma^{s-1}I_{k_{s-1}}& \gamma^{s-1} A_{s-1,s}\\
\end{bmatrix}\ ,
\end{equation*} where $A_{i,s}\in M_{k_i\times n-K}(R/\gamma^{s-i}R)$ and $A_{i,j}\in M_{k_i\times k_j}(R/\gamma^{s-i}R)$ for $j< s$.\end{proposizione}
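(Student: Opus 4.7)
The plan is to perform a Gaussian-elimination-style reduction adapted to the finite chain ring $R$. The key tool is the $\gamma$-adic valuation $v(r)=j$ defined for a non-zero element $r=u\gamma^j$ with $u$ a unit and $0\le j\le s-1$, which is well-defined because $R$ is a chain ring with maximal ideal $(\gamma)$ of nilpotency index $s$. Every non-zero element is a unit times a unique power of $\gamma$, and divisibility is governed entirely by $v$. Elementary row operations preserve the code $C$, while column permutations produce a permutation-equivalent code, so both are admissible.

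Starting from any generator matrix of $C$, I would build a staircase iteratively. At each step I locate an entry of minimum valuation $e$ in the current working submatrix, move it to the top-left corner via row swap and column permutation, and multiply its row by the inverse of its unit part, making the pivot equal to $\gamma^e$. Because $e$ is the minimum valuation in the whole submatrix, every other entry of the first column is divisible by $\gamma^e$ and can be cleared below the pivot by standard row operations. Removing the first row and column and recursing produces a staircase with diagonal pivots $\gamma^{e_1},\gamma^{e_2},\dots,\gamma^{e_t}$. The greedy choice of minimum valuation ensures $e_1\le e_2\le\dots\le e_t\le s-1$ and, more importantly, that every entry of row $\ell$ is divisible by $\gamma^{e_\ell}$, since at the moment row $\ell$ was pivoted the entire remaining submatrix lay in the ideal $(\gamma^{e_\ell})$.

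I would then repackage the staircase into the required block form. Setting $k_i=\#\{\ell:e_\ell=i\}$ and $K=\sum_i k_i$, I permute columns so that the pivot columns of valuations $0,1,\dots,s-1$ occupy consecutive blocks of sizes $k_0,k_1,\dots,k_{s-1}$ in that order, leaving the remaining $n-K$ non-pivot columns at the end. Within the $i$-th horizontal strip (the rows with pivot valuation $i$), the restriction to the $i$-th diagonal block is upper triangular with diagonal entries $\gamma^i$; its above-diagonal entries are of the form $\gamma^i\cdot a$ by the divisibility property established above, so they can be cleared by subtracting multiples of later rows within the strip, yielding the block $\gamma^i I_{k_i}$. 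For the strictly upper blocks $A_{i,j}$ with $j>i$, the entries are again divisible by $\gamma^i$; moreover, since the $i$-th strip consists of rows of the form $\gamma^i$ times an $R$-vector, modifying any entry by an element of $(\gamma^{s-i})$ leaves the row unchanged, so the coefficients of $A_{i,j}$ are naturally read in $R/\gamma^{s-i}R$, which also matches the size conventions stated for the last block $A_{i,s}$.

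The main obstacle will be verifying minimality of the produced row set, namely that no row is an $R$-linear combination of the others, as required by the definition of a generator matrix. I would argue this column by column using the pivot structure: in a putative relation expressing a row $r$ of strip $R_i$ (with pivot in column $c$) as a combination of the other rows, only rows from earlier strips $R_{i'}$ with $i'<i$ can contribute in column $c$, since rows in $R_i\setminus\{r\}$ vanish there after intra-strip clearing and rows of later strips vanish there by the staircase. Tracking the induced divisibility constraints inductively on $i'=0,1,\dots,i-1$ — using at each stage the pivot column of $R_{i'}$ itself — forces the contributing coefficients to be annihilated, contradicting the existence of the pivot $\gamma^i$ of $r$ in column $c$. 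The same fact can alternatively be obtained from the structure theorem for finitely generated modules over a finite chain ring, which asserts that the invariants $k_i$ depend only on $C$, so the number of rows in the produced form is minimal.
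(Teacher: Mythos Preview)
The paper does not prove this proposition; it is quoted verbatim from \cite[Proposition 3.2]{norton} and used as a black box, so there is no in-paper argument to compare against.

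Your proposal is a correct and self-contained proof along the classical lines one finds in Norton--S\u{a}l\u{a}gean: the $\gamma$-adic valuation gives a well-defined pivot hierarchy, the greedy choice of a minimal-valuation pivot guarantees that every entry of the $\ell$-th row lies in $(\gamma^{e_\ell})$, and grouping pivot columns by valuation followed by intra-strip back-substitution yields the block form $\gamma^i I_{k_i}$ on the diagonal. Your justification that the entries of $A_{i,j}$ are only well defined modulo $\gamma^{s-i}$ (because the row is $\gamma^i$ times an $R$-vector and $\gamma^i\cdot\gamma^{s-i}=0$) is exactly the right reason for the quotient appearing in the statement.

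One small refinement in your minimality argument: for a row $m$ in strip $i'$ with $0<i'<i$, looking at its pivot column only gives $\alpha_m\gamma^{i'}=0$, not $\alpha_m=0$. This is still enough, since row $m=\gamma^{i'}v_m$ then satisfies $\alpha_m\cdot\text{row}_m=0$, so it contributes nothing to the relation; your phrase ``coefficients to be annihilated'' should be read in this sense. With that understood, the induction on $i'$ goes through and the contradiction at column $c$ (where only row $r$ survives with entry $\gamma^i\neq 0$) is clean. Your alternative appeal to the structure theorem for finitely generated modules over a finite chain ring is also perfectly valid and is in fact how the uniqueness of the $k_i$ is usually established.
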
 
For all $0\le i \le s$ the $ k_i $'s denote the number of rows of $ G $ that are divisible by $ \gamma^i $ but not by $ \gamma^{i+1} $. The  parameters  $k_0,\dots, k_{s-1}$ are the same for all generator matrices in systematic form, and  $ C $ is said to be of \emph{type} $(k_0,k_1,\dots,k_{s-1}) $.  The \emph{rank} of $C$ is defined as $K=\sum_{i=0}^{s-1}k_i$.

\begin{definizione}
The \emph{free rank} of $C$ is defined  to be the maximum of the rank of the free submodules of $C$ and it coincides with $k_0$.\end{definizione}\begin{definizione} A linear code is said to be \emph{free} if its rank coincides with its free rank. In this case, the code is a free $R$-submodule which is isomorphic to $R^{k_0}$. 
\end{definizione}

If $C$ is a free code, then any  systematic generator matrix has the form \begin{equation*}G=
    \begin{bmatrix}
    I_{k_0} & A 
    \end{bmatrix}\in (\Z/p^s\Z)^{k_0\times n}.
\end{equation*} 

Since for all $0\le j\le s-1$  we have $  \mo{\gamma^jR}=p^{s-j}$ (see \cite[Lemma 2.4]{norton}), it is possible to compute the cardinality of a linear code.
\begin{teorema} {\cite[Theorem 3.5]{norton}} \label{nele}
	A linear code $ C $ over $ R $  of type $ (k_0,\dots,k_{s-1}) $ has cardinality $ \mo{C}=p^{\sum_{i=0}^{s-1} (s-i)k_i}$. 
\end{teorema}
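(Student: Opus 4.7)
The plan is to use the standard form of the generator matrix from Proposition \ref{gmstand} directly, and count the distinct $R$-linear combinations of its rows. Denote by $g_{i,j}$, for $0 \le i \le s-1$ and $1 \le j \le k_i$, the $j$-th row of the $i$-th block of $G$, i.e.\ the row whose nonzero leading segment is $\gamma^i$ followed by $\gamma^i A_{i,*}$. Every codeword is an $R$-linear combination $c = \sum_{i,j} r_{i,j}\, g_{i,j}$ with $r_{i,j} \in R$, so the goal is to determine how many distinct codewords arise as the coefficients $r_{i,j}$ vary.

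First I would show that the effective coefficient of $g_{i,j}$ lies in $R/\gamma^{s-i}R$. Since $g_{i,j}$ has $\gamma^i$ as a nonzero entry (in the column corresponding to the identity block $\gamma^i I_{k_i}$), we have $r\, g_{i,j} = 0$ if and only if $r \in \mathrm{Ann}_R(\gamma^i) = \gamma^{s-i}R$, using that $R$ is a chain ring with maximal ideal $(\gamma)$ of nilpotency index $s$. Hence changing $r_{i,j}$ by an element of $\gamma^{s-i}R$ does not alter the contribution of the corresponding row.

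Next, I would establish uniqueness of the representation when the $r_{i,j}$ are taken from a fixed set of coset representatives of $R/\gamma^{s-i}R$. The argument is a top-down reading off the entries of $c$ in the pivot columns: in the first $k_0$ columns only the top block contributes, so the $r_{0,j}$ are recovered exactly in $R$; in the next $k_1$ columns, after subtracting the (now known) contribution of the first block, the remaining entries are $\gamma r_{1,j}$, which recover $r_{1,j}$ modulo $\gamma^{s-1}R$; inductively, in the $i$-th pivot block the value of $r_{i,j}$ is recovered modulo $\gamma^{s-i}R$. This shows that the map sending $(r_{i,j})$ in $\prod_{i,j} R/\gamma^{s-i}R$ to the corresponding codeword is a bijection.

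Finally, using $|\gamma^j R| = p^{s-j}$ from \cite[Lemma 2.4]{norton}, one has $|R/\gamma^{s-i}R| = p^{s-i}$, so
\[
|C| \;=\; \prod_{i=0}^{s-1} \bigl(p^{s-i}\bigr)^{k_i} \;=\; p^{\sum_{i=0}^{s-1}(s-i)k_i}.
\]
The only delicate point is the uniqueness step above, since it is what requires the precise staircase shape of the standard form; once the identity pivot blocks are used to propagate the determination of the $r_{i,j}$ modulo $\gamma^{s-i}R$, the counting is immediate.
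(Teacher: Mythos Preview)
The paper does not give its own proof of this theorem: it is simply quoted from \cite[Theorem~3.5]{norton}, with the sentence before it pointing to $|\gamma^j R|=p^{s-j}$ as the key ingredient. So there is no in-paper argument to compare against.

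Your proof is correct and is essentially the standard one. The two points that need care are exactly the ones you flag: (i) every entry of a row $g_{i,j}$ in the $i$-th block of the standard form is a multiple of $\gamma^i$, so $r\,g_{i,j}=0$ iff $r\gamma^i=0$ iff $r\in\gamma^{s-i}R$; and (ii) the staircase of identity pivots lets you recover, inductively from left to right, each coefficient $r_{i,j}$ modulo $\gamma^{s-i}R$. Together these give the claimed bijection with $\prod_{i}(R/\gamma^{s-i}R)^{k_i}$, and the cardinality count $|R/\gamma^{s-i}R|=p^{s-i}$ finishes it. One small wording clarification: in step (ii), when you say ``in the first $k_0$ columns only the top block contributes'', this is because the lower blocks have zeros in those columns in the standard form of Proposition~\ref{gmstand}; it is worth stating this explicitly, since it is what makes the top-down elimination work.
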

We attach the standard inner product to the ambient space i.e. \\$v\cdot w=\sum v_iw_i$. The dual code $C^\perp$ of $C$ is defined, as in the classical case, by  $$ C^\perp=\{v\in R^n \mid v\cdot w=0 \text{ for all }w\in C\}.$$
In \cite{wood99}, Wood proved that the dual code of a code over a  Frobenius ring, and hence over a finite chain ring, is well defined (i.e. $(C^\perp)^\perp=C)$ . The dual code $C^\perp$ has the following parameters: 
\begin{teorema}
    Let $C\subseteq R^n$ be a linear code of rank $K$ and type $ (k_0,\dots,k_{s-1}) $. Then $C^\perp$ is a linear code of rank $n-k_0$ and type $(n-K,k_{s-1},\dots,k_1)$. 
\end{teorema}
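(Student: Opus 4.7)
The plan is to obtain the result by constructing an explicit parity-check matrix $H$ of $C$ in a standard (layered) form starting from the generator matrix $G$ of Proposition \ref{gmstand}, and then reading off the rank and the type of $C^\perp$ directly from the block shape of $H$.

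First I would treat the free case as the base pattern: when $C$ is free, the classical construction $H=[-A^{\top}\mid I_{n-k_0}]$ already gives a free generator matrix of $C^\perp$ of rank $n-k_0$. For the general case, I would show by a bottom-up construction that $C^\perp$ admits a generator matrix with a layered, anti-diagonal block structure complementary to that of $G$: a top block of $n-K$ rows with a free identity block $I_{n-K}$, followed by a block of $k_{s-1}$ rows with leading factor $\gamma$, then a block of $k_{s-2}$ rows with leading factor $\gamma^2$, and so on down to a block of $k_1$ rows with leading factor $\gamma^{s-1}$. The key algebraic observation is that in $R$ one has $\gamma^{i}\gamma^{s-i}=0$, so a row block of $G$ with pivots $\gamma^{i}I_{k_i}$ can be orthogonalised by rows whose pivots are scaled by $\gamma^{s-i}$, which forces exactly the advertised shape of $H$.

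Once such an $H$ is produced, the type of $C^\perp$ is read off directly from the block structure: the top block contributes $n-K$ rows not divisible by $\gamma$, and each subsequent block contributes $k_{s-i}$ rows divisible by $\gamma^{i}$ but not by $\gamma^{i+1}$. This gives type $(n-K,k_{s-1},\dots,k_1)$ and rank
\[
(n-K)+\sum_{j=1}^{s-1}k_{s-j}=n-K+(K-k_0)=n-k_0,
\]
as required. As a cross-check, one uses the identity $\lvert C\rvert\cdot\lvert C^\perp\rvert=\lvert R\rvert^n$, which holds for codes over Frobenius rings by Wood \cite{wood99}, together with Theorem \ref{nele}, to confirm the cardinality of $C^\perp$ predicted by the claimed type.

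The main obstacle lies in the explicit construction and verification of $H$. Concretely, one must check that (i) the inductive procedure produces a well-defined matrix whose entries in each row block lie in the correct quotient ring $R/\gamma^{s-i}R$; (ii) the rows of $H$ annihilate every row of $G$, so that the module generated by $H$ is contained in $C^\perp$; and (iii) no row of $H$ is a linear combination of the others, so that the rank of the generated code equals the predicted value $n-k_0$. Step (iii) is the most delicate point, since linear dependence over a chain ring is more subtle than over a field; however, combining (ii) with the cardinality identity $\lvert C\rvert\lvert C^\perp\rvert=\lvert R\rvert^n$ upgrades the inclusion in (ii) to an equality, which in turn forces the rows of $H$ to be non-redundant and certifies that $H$ truly generates $C^\perp$.
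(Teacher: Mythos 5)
Your approach is essentially the one the paper relies on: the paper states this theorem without proof, and the very next proposition (citing Norton--S\u{a}l\u{a}gean) records exactly the systematic parity-check matrix with anti-diagonal blocks $I_{n-K},\gamma I_{k_{s-1}},\dots,\gamma^{s-1}I_{k_1}$ that you propose to construct, from which the type $(n-K,k_{s-1},\dots,k_1)$ and the rank $n-k_0$ are read off just as you do. Your plan to certify that $H$ generates all of $C^\perp$ via $\lvert C\rvert\lvert C^\perp\rvert=\lvert R\rvert^n$ is sound, provided you also count $\lvert\langle H\rangle\rvert$ using the identity pivots (as in the paper's Proposition~\ref{0cod}) before invoking the cardinality identity.
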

As a consequence,  the dual code of a free code is again free. \\ \newline
 We call any matrix $H$ a \emph{parity-check} matrix for $C$ if its kernel is $C$. 
\begin{proposizione}{\cite[Theorem 3.10]{norton}}
	Let $ C $ be a linear code of type $(k_0,\dots,k_{s-1})$. Then $ C $ is permutation equivalent to a code having a parity-check matrix in systematic form:  	\begin{equation}\label{pcm}
	H=\begin{bmatrix}
	B_{0,s}&B_{0,s-1}  &\dots& B_{0,1}& I_{n-K} \\
	
	\gamma B_{1,s}&\gamma B_{1,s-1} &\dots &\gamma I_{k_{s-1}}&0 \\

	\vdots &  \vdots&  & \vdots & \vdots \\
	\gamma^{s-1} B_{s-1,s} &\gamma^{s-1}I_{k_{1}}&  \dots& 0&0\\
	\end{bmatrix}=
	\begin{bmatrix}
	 H^{(0)}\\
	
	\gamma H^{(1)} \\
	\vdots\\
	\gamma^{s-1}H^{(s-1)}

	\end{bmatrix},
	\end{equation} where, for $ 0\le i,j\le s, \ B_{i,j}=-\sum_{k=i+1}^{j-1} B_{i,k}A^T_{s-j,s-k}-A^T_ {s-j,s-i}\ . $
\end{proposizione}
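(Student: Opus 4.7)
After the necessary column permutation, assume $G$ is in the standard form of Proposition~\ref{gmstand}. The plan is to exhibit $H$ with the claimed block structure, force $GH^{T}=0$ (which will read off the recursion for the $B_{i,j}$), and then match cardinalities against the duality theorem to conclude that $H$ is a genuine parity-check matrix. I partition the $n$ columns once and for all into $s+1$ blocks of widths $k_0,k_1,\dots,k_{s-1},n-K$, labelled $\ell=0,1,\dots,s$ (setting $k_s:=n-K$), and use this single partition for both $G$ and the candidate $H$. In this indexing, row block $a$ of $G$ has $\gamma^a I_{k_a}$ at $\ell=a$, $\gamma^a A_{a,\ell}$ at $\ell>a$, and zero at $\ell<a$; row block $b$ of $H$ is organised symmetrically, with $\gamma^b I_{k_{s-b}}$ at $\ell=s-b$, $\gamma^b B_{b,s-\ell}$ at $\ell<s-b$, and zero at $\ell>s-b$. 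The total number of rows of $H$ is $\sum_{b=0}^{s-1}k_{s-b}=n-k_0$, matching the rank predicted for $C^{\perp}$ by the duality theorem stated just above.

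Next I compute $GH^{T}$ block by block. In the $(a,b)$-block only column indices $a\le \ell\le s-b$ contribute, every summand picks up a factor $\gamma^{a+b}$, and the block is automatically zero when $a+b\ge s$. For $a+b<s$, separating the two endpoint contributions from the interior gives
\begin{equation*}
(GH^{T})_{a,b} \;=\; \gamma^{a+b}\Bigl[\,B_{b,s-a}^{T} \,+\, A_{a,s-b} \,+\, \sum_{\ell=a+1}^{s-b-1} A_{a,\ell}\,B_{b,s-\ell}^{T}\Bigr].
\end{equation*}
Requiring this to vanish is equivalent to the bracket lying in $\gamma^{s-a-b}R$. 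Substituting $i=b$, $j=s-a$ (so that $j-i=s-a-b$), re-indexing the interior sum by $k=s-\ell$, and transposing turns the congruence into exactly the recursion
\[
B_{i,j} \equiv -A_{s-j,s-i}^{T} - \sum_{k=i+1}^{j-1} B_{i,k}\,A_{s-j,s-k}^{T} \pmod{\gamma^{j-i}}
\]
that appears in the statement.

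I then take this recursion as the \emph{definition} of the $B_{i,j}$: since the right-hand side references only the $B_{i,k}$ with $i<k<j$, the definition is well-posed by induction on $j-i$, starting from $B_{i,i+1}=-A_{s-i-1,s-i}^{T}$, and each $B_{i,j}$ is determined modulo $\gamma^{j-i}$, which is precisely the precision needed for $\gamma^{a+b}$ to annihilate the bracket. By construction $GH^{T}=0$, so every row of $H$ lies in $C^{\perp}$. Finally, the staircase of identity blocks in $H$ exhibits its rows as a standard-form generating set of an $R$-submodule of type $(n-K,k_{s-1},\dots,k_1)$; by Theorem~\ref{nele} this submodule has cardinality $p^{s(n-K)+\sum_{b=1}^{s-1}(s-b)k_{s-b}}$, which coincides with $|C^{\perp}|$ computed from its type. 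The containment $\mathrm{rows}(H)\subseteq C^{\perp}$ is therefore an equality, and $H$ is a parity-check matrix for $C$.

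The main obstacle is the bookkeeping with the two mirror-image block structures in $G$ and $H$ (the identity staircase climbs rightward in $G$ and descends rightward in $H$), which forces the re-indexing $k=s-\ell$ and the transposition needed to translate between the orthogonality equations $(GH^{T})_{a,b}=0$ and the stated recursion, together with the verification that $B_{i,j}$ need only be specified modulo $\gamma^{j-i}$. Once this is set up carefully the rest of the argument is routine $\gamma$-adic accounting and a cardinality comparison.
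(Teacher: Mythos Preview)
The paper does not prove this proposition; it is quoted verbatim from \cite[Theorem 3.10]{norton} and used as a black box, so there is no in-paper argument to compare against.

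Your argument is correct and is essentially the standard one. The block computation of $(GH^{T})_{a,b}$ is carried out accurately, the substitution $i=b$, $j=s-a$, $k=s-\ell$ does recover the stated recursion, and the inductive definition of the $B_{i,j}$ on $j-i$ is well-posed. The closing step is also sound: the staircase of blocks $\gamma^{b}I_{k_{s-b}}$ (with zeros to their right) forces the row span of $H$ to be a module of type $(n-K,k_{s-1},\dots,k_1)$, Theorem~\ref{nele} then gives its cardinality, and matching this against $|C^{\perp}|$ (from the duality theorem) upgrades the containment coming from $GH^{T}=0$ to equality; the double-dual property $(C^{\perp})^{\perp}=C$ finishes the identification $\ker H=C$. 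One cosmetic remark: the statement writes the recursion as an equality rather than a congruence modulo $\gamma^{j-i}$, but this is harmless since one simply takes that equality as the lift defining $B_{i,j}$, exactly as you do.
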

Clearly, $H$ is a generator matrix for $C^\perp$.
\subsection{Hamming weight distribution and Singleton-like bounds}
As in the classical case, the \emph{Hamming-weight distribution} of a ring-linear code is a vector $A = (A_i)_{i=0,1\dots,n}$,  where $A_i$ denotes the number of codewords of $C$ of weight $i$. The weight distribution
can be encoded as coefficients in a  polynomial. \begin{definizione}
The (Hamming)-\emph{weight enumerator polynomial} of a ring-linear code $C$ of length $n$ is the bivariate polynomial $$ W_C(X,Y)=\sum_{c\in C} X^{n-w(c)}Y^{w(c)}=\sum_{i=0}^n A_iX^{n-i}Y^i\ .$$

\end{definizione} 

The Hamming-weight enumerators of a code and its dual are related by the MacWilliams identities. \begin{teorema}\cite[Theorem 8.3]{wood99}\label{macwid}
    For linear codes over a finite chain ring $R$ with $p^s$ elements, the MacWilliams identities hold: \[W_{C^\perp}(X,Y)=\frac{1}{\mo{C}}W_{C}(X+(p^s-1)Y,X-Y)\ .\]
\end{teorema}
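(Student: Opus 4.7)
I would give the character-theoretic proof by Poisson summation, following Wood's approach. First, I invoke that every finite chain ring $R$ is a Frobenius ring and hence admits a \emph{generating character} $\chi\colon(R,+)\to\mathbb{C}^{\times}$, that is, an additive character whose kernel contains no nonzero ideal. Equivalently, every additive character of $R$ has the form $r\mapsto\chi(ur)$ for a unique $u\in R$, so the pairing $(u,r)\mapsto\chi(ur)$ identifies $(R,+)$ with its Pontryagin dual, and this identification extends coordinatewise to $R^{n}$ via the standard inner product.

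Next, I define $f\colon R^{n}\to\mathbb{C}[X,Y]$ by $f(c)=X^{n-w(c)}Y^{w(c)}$. Because the Hamming weight is coordinatewise, $f$ factors as $f(c)=\prod_{i=1}^{n}g(c_{i})$ with $g(0)=X$ and $g(r)=Y$ for $r\neq 0$. The Fourier transform $\widehat{f}(y)=\sum_{c\in R^{n}}\chi(c\cdot y)f(c)$ therefore also factors as $\widehat{f}(y)=\prod_{i=1}^{n}\widehat{g}(y_{i})$, and the local computation yields
\[
\widehat{g}(u)=X+Y\sum_{r\in R\setminus\{0\}}\chi(ur)=\begin{cases}X+(p^{s}-1)Y & \text{if } u=0,\\ X-Y & \text{if } u\neq 0,\end{cases}
\]
using that, for $u\neq 0$, the character $r\mapsto\chi(ur)$ is nontrivial (precisely because $\chi$ is generating), so $\sum_{r\in R}\chi(ur)=0$.

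I would then apply Poisson summation on the subgroup $C\subseteq R^{n}$: for any function $f\colon R^{n}\to\mathbb{C}[X,Y]$,
\[
\sum_{y\in C^{\perp_{\chi}}}f(y)=\frac{1}{\lvert C\rvert}\sum_{c\in C}\widehat{f}(c),
\]
where $C^{\perp_{\chi}}=\{y\in R^{n}:\chi(c\cdot y)=1\text{ for all }c\in C\}$. Substituting the product form of $\widehat{f}$, each term on the right becomes $(X+(p^{s}-1)Y)^{n-w(c)}(X-Y)^{w(c)}$, and summing over $c\in C$ collapses to $\tfrac{1}{\lvert C\rvert}W_{C}(X+(p^{s}-1)Y,\,X-Y)$, which matches the desired right-hand side.

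The main obstacle is to identify $C^{\perp_{\chi}}$ with the Euclidean dual $C^{\perp}=\{y\in R^{n}:y\cdot c=0\ \forall c\in C\}$ used throughout the paper. The inclusion $C^{\perp}\subseteq C^{\perp_{\chi}}$ is immediate. For the reverse, if $y\in C^{\perp_{\chi}}$ and $c\in C$, then $rc\in C$ for every $r\in R$, so $\chi(r(y\cdot c))=1$ for all $r$; thus $R\cdot(y\cdot c)\subseteq\ker\chi$, and the generating property of $\chi$ forces the ideal $R\cdot(y\cdot c)$ to be zero, i.e.\ $y\cdot c=0$. This is the only step where the Frobenius (equivalently, chain) structure of $R$ is essential and is the conceptual heart of the generalization beyond finite fields; the rest of the proof is a formal unpacking of Poisson summation together with the coordinatewise factorization above.
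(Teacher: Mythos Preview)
The paper does not prove this theorem at all: it is quoted with the citation \cite[Theorem 8.3]{wood99} and used as a black box, so there is no ``paper's own proof'' to compare against. Your proposal is a correct and complete reconstruction of Wood's character-theoretic argument via Poisson summation, including the crucial step---specific to the Frobenius setting---that the character-theoretic annihilator $C^{\perp_{\chi}}$ coincides with the Euclidean dual $C^{\perp}$ because the generating character has no nonzero ideal in its kernel. Nothing is missing, and the local Fourier computation of $\widehat{g}$ is exactly right for a ring of size $p^{s}$.
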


Moreover, for the Hamming metric over $R$, Singleton-like bounds are known.
\begin{rem}\label{mdspr}
The Singleton  bound for codes over any alphabet of size $p^s$ states that \begin{equation*}
     d\le n-\log_{p^s}(\mo{C})+1\ ,
\end{equation*}
(see for example \cite{macw}). In the framework of codes over finite chain rings, only free codes meet this bound and they are said \emph{maximum distance separable} (MDS) codes. \\
 As shown in \cite{dougherty}, for codes over principal ideal rings \begin{equation}\label{generalizedSing}
     d\le n-K+1 \ .
 \end{equation} This  bound is in general tighter than the Singleton bound and they coincides if and only if the code is free. A linear code over a finite chain ring meeting this bound is said to be \emph{maximum distance with respect to rank } (MDR). In particular a  code $C$ is MDS if and only if it is MDR and free.  \end{rem}
 It is well-known (\cite[Corollary 1]{shiromoto}) that the MDS property is invariant under duality (i.e. the dual of an MDS code over a finite chain ring is again MDS). In general, the dual code of an MDR code does not preserve the property. \begin{ex}
 Let $C=\langle (1,0,1),(0,2,0), (0,0,2)\rangle\subset(\Z/4\Z)^3$ be a linear code. $C$ is MDR since $d(C)=1=n-K+1$. However, its dual code $C^\perp=\langle(2,0,2), (0,2,0)\rangle$ has minimum distance $d(C^\perp)=1<2=(n-K+1)$.
 \end{ex} 
 As in the classical case of linear codes over finite fields, we  can measure how far away a linear code $C$ is  from being MDR.
\begin{definizione}
Let $C$ be a linear code of length $n$ and rank  $K$. The \emph{defect} $s(C)$ of $C$ is defined as $s(C)\coloneqq n+1-K-d$.
\end{definizione}

\subsection{On the parity-check matrix and its submatrices}
The parity-check matrix of a code give important information on some structural properties of the code, such as the minimum distance.\\
Analogously to linear codes over finite fields (see \cite[Theorem 1.4.13]{pless} ), given a code $C$ over a finite chain ring there is a link between the weights of the codewords of $ C $  and its parity-check matrix $ H $. \begin{teorema}
	Let $C$ be a linear code over $ R $ with parity-check matrix $H$. If $c\in C$, the columns
	of $H$ corresponding to the non-zero coordinates of $c$ are linearly dependent. Conversely,
	if a linear dependence relation with only non-zero coefficients exists among $w$ columns of $H$,
	then there is a codeword in $C$ of weight $w$ whose non-zero coordinates correspond to these
	columns.
\end{teorema}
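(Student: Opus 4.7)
The plan is to argue both directions directly from the defining property $C=\ker H$, in complete analogy with the finite field case. Write $H=[h_1\ h_2\ \dots\ h_n]$ in terms of its columns, so that for any $c=(c_1,\dots,c_n)\in R^n$ the product $Hc^{\top}$ equals $\sum_{i=1}^n c_i h_i$. The membership $c\in C$ is therefore equivalent to the single relation
\[
\sum_{i=1}^{n} c_i\, h_i \;=\; 0
\]
in the column module of $H$. The entire statement is a reformulation of this identity once we distinguish the zero from the non-zero coordinates.

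For the forward direction, I would start with a codeword $c\in C$ and let $I=\{i : c_i\neq 0\}$ index its support. The indices $i\notin I$ contribute $0\cdot h_i=0$ to the sum above, so the relation collapses to $\sum_{i\in I} c_i h_i = 0$, with every coefficient $c_i$ non-zero by construction. This is precisely a linear dependence, over $R$, among the columns of $H$ indexed by the support of $c$, as required.

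For the converse, I would take the hypothesized dependence $\sum_{i\in I} a_i h_i = 0$ with $|I|=w$ and each $a_i\in R\setminus\{0\}$, and simply define $c\in R^n$ by $c_i=a_i$ for $i\in I$ and $c_i=0$ otherwise. By construction, $Hc^{\top}=\sum_{i\in I} a_i h_i = 0$, so $c\in\ker H = C$. Since the non-zero entries of $c$ are exactly the coordinates in $I$, the codeword $c$ has Hamming weight $w(c)=|I|=w$ and its support equals $I$.

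There is no real obstacle here: the only thing one might double-check is that the word ``linearly dependent'' in this setting is understood as the existence of an $R$-linear combination equal to zero with all coefficients non-zero in $R$ (rather than merely not all zero), but this is exactly the convention made explicit in the statement and it matches the definition used over fields. No property of chain rings beyond the module structure of $R^n$ is invoked, so the proof is essentially formal once the characterization $C=\ker H$ is in place.
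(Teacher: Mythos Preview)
Your proof is correct and follows essentially the same approach as the paper: both arguments reduce to the identity $Hc^{\top}=\sum_{i=1}^n c_i h_i$, deducing the dependence on $\mathrm{supp}(c)$ in the forward direction and extending the given non-zero coefficients by zeros to build the codeword in the converse.
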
 \begin{proof}
If $ c\in C $,  the matrix product $0=Hc^T=\sum_{i=1}^n \mathbf{h_i}c_i$, where  $\mathbf{h_i}$ is the $ i^{th} $ column of $ H $,   is a linear combination of the columns of $ H$ with coefficients provided by $c$. Conversely, if there are $ w $ linearly dependent columns in  $H$, then $ \sum_{i=0}^n \alpha_{i}\mathbf{h_{i}}=0, \  \alpha_{i}\in R$ and $ w $ of them are non-zero . If $ c=(\alpha_1,\dots,\alpha_n) $, then $ Hc^\top=0 $ and $ \text{w}(c)=w $ and $ c $ is the desired codeword.
\end{proof}
For any $  0\le j\le s-1 $, let $H^{(j)}$ be  submatrices of $H$ defined according to \eqref{pcm}. Since the $H^{(j)}$s are all full rank, the following is immediate.
\begin{proposizione}\label{0cod}
	Let $ H $ be a parity-check matrix for the  code $ C $, and hence a generator matrix for $ C^\perp $. A message $ v\in R^{n-k_0} $ is encoded as the zero codeword in $ C^\perp $ if and only if it is of the form \[v=[
\underbrace{0,\dots,0}_{n-K}||\gamma^{s-1}\cdot\underbrace{ v_1}_{k_{s-1}}||\gamma^{s-2}\cdot\underbrace{ v_2}_{k_{s-2}}||\dots||\gamma \cdot\underbrace{ v_s}_{k_{1}}
	] \ . \]
\end{proposizione}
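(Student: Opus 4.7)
The plan is to exploit the staircase structure of the parity-check matrix \eqref{pcm}. First I would split the message $v$ according to the row-blocks of $H$: write $v=(v^{(0)},v^{(1)},\dots,v^{(s-1)})$, where $v^{(0)}$ has length $n-K$ and $v^{(j)}$ has length $k_{s-j}$ for $1\le j\le s-1$, so that
\[
vH \;=\; \sum_{j=0}^{s-1} \gamma^{j}\, v^{(j)} H^{(j)}.
\]
The goal is then to translate $vH=0$ into a set of constraints on the individual $v^{(j)}$.

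The key observation is that for each $0\le j\le s-1$ the identity block $\gamma^{j} I_{k_{s-j}}$ sits in a distinguished column block of $H$ (the $j$-th from the right), with zeros in every row-block $i>j$. This triangular shape invites back-substitution: I would project the equation $vH=0$ onto these identity-containing column blocks starting from the rightmost, and run an induction on $j$. The base case $j=0$ yields $v^{(0)}=0$ at once, because only row-block $0$ contributes to the last $n-K$ columns, and its contribution is $v^{(0)} I_{n-K}$.

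For the inductive step I would assume $v^{(j)}$ already has the claimed form for $j<i$, so that $\gamma^{j} v^{(j)}=\gamma^{s}v_j=0$ by nilpotency of $\gamma$; this makes every earlier row-block contribute $0$ to column block $i$. The only surviving term in that column block is then $\gamma^{i} v^{(i)} I_{k_{s-i}}$, and equating it to zero reduces the problem to the algebraic fact that $\ann{R}{\gamma^{i}}=\gamma^{s-i}R$ in a finite chain ring, from which $v^{(i)}=\gamma^{s-i}v_i$ follows entrywise. The converse is immediate: when $v$ has the stated form, every summand $\gamma^{j} v^{(j)} H^{(j)}=\gamma^{s}v_j H^{(j)}=0$, so $vH=0$ without any use of the $B_{i,j}$.

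The main obstacle is not conceptual but notational. Row-block $j$ has height $k_{s-j}$ while its identity block lies in the $j$-th column block from the right, so one has to walk through the correct column blocks in the correct order during the back-substitution; once the indexing is set up consistently with \eqref{pcm}, the rest is a direct triangular solve.
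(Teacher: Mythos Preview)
Your argument is correct and is precisely the explicit version of what the paper leaves implicit: the paper states the proposition as ``immediate'' from the fact that each $H^{(j)}$ has full rank and gives no written proof. Your back-substitution through the identity blocks of the staircase form, together with the identity $\ann{R}{\gamma^{i}}=\gamma^{s-i}R$, is exactly the natural way to unpack that immediacy.
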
 
\begin{nota}
Let  $ M\in M_{t\times n}(R) $ be a matrix. 
According to the notation of Proposition \ref{gmstand}, we say that $M$ is of \emph{type} $(t_0,\dots,t_{s-1})$ if $ t_i $ rows of $M$ are divisible by $ \gamma^i $ but not by $ \gamma^{i+1} $,  $ 0\le i\le s-1 $.
\end{nota}

\begin{definizione}
Let $ M\in M_{t\times n}(R) $ be a matrix of type $(t_0,\dots,t_{s-1})$.\begin{itemize}
\item  For any subset $ \mathcal{I}\subseteq \{1,\dots,n\} $ of size $ \nu $, $ \mathcal{I}=\{i_1,\dots i_\nu \} $ with\\  $ i_1<i_2<\dots<i_\nu  $, we define $ \mat{M}{\I} $ as the $ t\times\nu $ submatrix of $ M $ identified by the columns indices $ \I. $
\item We define $\nmat{M}{\nu}{r_0,r_1,\dots,r_t}  $ to be the number of $t\times \nu$ submatrices of $ M $  of type $( r_0,r_1,\dots,r_t) $
\end{itemize}
\end{definizione} 
Let $C$ be an $R$-linear code of type $( k_0,k_1,\dots,k_{s-1})$, and let $ H\in M_{(n-k_0)\times n} (R)$ be a parity-check matrix in standard form for $C$. For any fixed  $ \I $ of size $\nu$, $ \mat{H}{\I} $  is a $( n-k_0)\times\nu $ submatrix of $ H $ and, as in \eqref{pcm}, we can write: 	\begin{equation}\label{HIdef}
\mat{H}{\I}=
\begin{bmatrix}
\mat{H}{\I}^{(0)}\\

\gamma\mat{H}{\I}^{(1)} \\
\vdots\\
\gamma^{s-1}\mat{H}{\I}^{(s-1)}

\end{bmatrix}\ ,
\end{equation} where each $\mat{H}{\I}^{(j)}$ is obtained from  $H^{(j)}$ by removing the unnecessary columns. Since we are selecting $\nu< n $ columns from  $H$, the type of $\mat{H}{\I} $ and $H$ may differ. First of all, note that the dimension of the first block  may decrease. Indeed: \begin{enumerate}
	\item Some rows of $ \mat{H}{\I}^{(0)} $ can be written  as  linear combinations of the others, so they can  be removed  from the parity-check matrix; 
	\item Some rows of $ \mat{H}{\I}^{(0)} $ are multiples of $ \gamma^l $ for some $1\le l\le s-1$. If  this is the case, the rows can be moved in one of the subsequent blocks. 
\end{enumerate}
On the other hand,  the dimension of the second block can either increase, decrease or  remain unchanged. One or more of the following may occur:   \begin{enumerate}
	\item Some rows of  $ \mat{H}{\I}^{(0)} $ are added to $ 	\mat{H}{\I}^{(1)} $;
	\item Some rows of of $ \mat{H}{\I}^{(1)} $ are multiples of $ \gamma^l, \ 2\le l\le s-1 $. In this case, collecting $\gamma^l$, the row  can be moved in one of the subsequent blocks;
	\item Some rows of  $ \mat{H}{\I}^{(1)} $ are linear combinations of the others, and thus they can be removed from the matrix.  
\end{enumerate} 
The procedure can be iterated for any other block  $ \mat{H}{\I}^{(j)} $, $ 2\le j\le s-1 $. 
Therefore, the type of $ \mat{H}{\I} $,  being different from the type of $H$,  can not be studied in its full generality. However, if $ \nu $ satisfies certain conditions, the structure of  $ \mat{H}{\I} $ become more clear: we will focus on this problem in the next section.




\section{Weight distribution for linear codes}\label{Linearwd}
From now on, let $ C $ be a linear code over a finite chain ring $ R $ of length $n$, rank $K$ and type $(k_0,k_1,\dots,k_{s-1})$.

\begin{lemma}\label{ranghi}
	Let $ H\in M_{(n-k_0)\times n}(R) $ be a parity-check matrix for the code $ C $. If $n-d^\perp<\nu\le n $, then  all the $ (n-k_0)\times \nu $ submatrices of $ H $ are of type $( n-K,k_{s-1}, k_{s-2},\dots, k_1 )$. In particular they all have rank $ n-K $.
\end{lemma}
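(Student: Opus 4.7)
The plan is to exploit the hypothesis $n-\nu<d^\perp$: fewer than $d^\perp$ coordinates are discarded when we restrict to $\I$, so no information about $C^\perp$ can be lost. Concretely, if $\bar{\I}=\{1,\dots,n\}\setminus\I$, then any codeword of $C^\perp$ supported in $\bar{\I}$ has Hamming weight at most $n-\nu<d^\perp$ and must therefore vanish. The $R$-linear restriction map $\pi_{\I}\colon C^\perp\to R^\nu$ is thus injective, and since the rows of $H$ generate $C^\perp$, it induces an $R$-module isomorphism between $C^\perp$ and the row span of $\mat{H}{\I}$.

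The next step is to transfer this isomorphism down to the individual rows of $H$. An $R$-linear isomorphism preserves annihilators of elements, so for every row $r_i$ of $H$ one has $\ann{R}{r_i}=\ann{R}{\pi_{\I}(r_i)}$. To turn annihilators into block levels I would record the following chain-ring dictionary: for any $v\in R^m$, one has $\ann{R}{v}=\gamma^{s-j}R$ if and only if $v\in\gamma^jR^m\setminus\gamma^{j+1}R^m$. The verification is immediate, using only that $R$ is a chain ring: $\gamma^{s-j}v=0$ forces each entry of $v$ to lie in $\gamma^jR$, while $\gamma^{s-j-1}v\neq 0$ forces some entry of $v$ to be $\gamma^j$ times a unit. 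Applied to a row of the block $\gamma^jH^{(j)}$ of $H$ in the standard form \eqref{pcm}, whose annihilator in $R^n$ is $\gamma^{s-j}R$, this guarantees that its restriction in $R^\nu$ again lies in $\gamma^jR^\nu\setminus\gamma^{j+1}R^\nu$.

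Counting block by block then yields that $\mat{H}{\I}$ has exactly $n-K$ unit rows, $k_{s-1}$ rows in $\gamma R^\nu\setminus\gamma^2R^\nu$, and so on, i.e.\ that $\mat{H}{\I}$ is of type $(n-K,k_{s-1},\dots,k_1)$. The rank statement then follows, since the first block of $n-K$ rows must remain $\KK$-linearly independent modulo $\gamma$: otherwise the free rank of the row span would drop below $n-K$, contradicting its $R$-isomorphism with $C^\perp$. The only conceptually delicate point is the equivalence between block level and annihilator ideal, which is where the chain-ring hypothesis on $R$ is really used; the remainder is a direct consequence of the weight bound $n-\nu<d^\perp$.
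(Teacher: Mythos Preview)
Your argument is correct and, in fact, cleaner than the paper's. The paper argues block by block and by contradiction: assuming a row of $\mat{H}{\I}^{(0)}$ becomes divisible by $\gamma$, it builds (via Proposition~\ref{0cod}) an explicit message $v$ with $vH\neq 0$ but $v\mat{H}{\I}=0$, producing a nonzero codeword of $C^\perp$ of weight at most $n-\nu<d^\perp$; it then says the same trick can be iterated through the remaining blocks $\mat{H}{\I}^{(1)},\dots,\mat{H}{\I}^{(s-1)}$.

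Your route replaces this inductive construction by a single global observation: the coordinate projection $\pi_{\I}$ is injective on $C^\perp$ because its kernel would consist of codewords of weight $<d^\perp$. From injectivity you extract preservation of annihilators, and your ``chain-ring dictionary'' $\ann{R}{v}=\gamma^{s-j}R\Longleftrightarrow v\in\gamma^jR^m\setminus\gamma^{j+1}R^m$ translates this into preservation of the $\gamma$-divisibility level of every row, hence of the type. The rank claim then falls out of the $R$-module isomorphism $C^\perp\cong\pi_{\I}(C^\perp)$, since module invariants (type, rank, free rank) are preserved. What you gain is uniformity: all blocks are handled at once, with no need to set up a separate contradiction for each $\mat{H}{\I}^{(j)}$. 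What the paper's argument gains is concreteness---it exhibits the offending low-weight dual codeword explicitly---at the cost of leaving the higher-block steps to the reader.
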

\begin{proof}
Being the parity-check matrix of a linear code of type $(k_0,k_1,\dots,k_{s-1})$, $H$  is of type $( n-K,k_{s-1}, k_{s-2},\dots, k_1 )$. Let $\mat{H}{\I}$, $ \I=\{i_1,\dots, i_\nu\}$, be a $(n-k_0)\times \nu$ submatrix of $H$. Without loss of generality $\mat{H}{\I}$ can be written as in \eqref{HIdef}.
By contradiction, assume that a row in $ \mat{H^{(0)}}{\I} $ is multiple of $ \gamma $. Without loss of generality we may assume it is the first one. If $ \bar{v}=[\gamma, 0,\dots,0]\in R^{n-K}  $, then $ \bar{v}\cdot\mat{H^{(1)}}{\I}=0$.  
 The vector $\bar{v}$ can be used as a first brick for  constructing a new vector $v$. By Proposition \ref{0cod}, $v$ is not encoded as the zero codeword. Therefore $ c\coloneqq v\cdot H$ is a codeword in $ C^\perp $ of weight $ w(c)\le n-\nu $, contradicting the hypothesis $ d^\perp>n-\nu$.
 So, since no row in $\mat{H^{(0)}}{\I}$ is a multiple of $\gamma$, in $ \mat{H}{\I} $ there are at most $ k_{s-1} $ rows which are multiple of  $ \gamma $ but not of $ \gamma^2 $. Applying to $\mat{H^{(1)}}{\I}$  the procedure described above, it is possible to show they are exactly $k_{s-1}$.
 Iterating the process for all the remaining blocks  $ \mat{H^{(2)}}{\I},\dots, \mat{H^{(s-1)}}{\I}$ we get the thesis. \end{proof}
\begin{cor}\label{nmatrici}
	Let $ C $ be a linear code with parity-check matrix $ H $ and $ n-d^\perp< \nu\le n. $
\[ \nmat{H}{\nu}{t_0,t_1,\dots t_{s-1}}=	\begin{cases}
		
		\binom{n}{\nu}  \text{\quad if }  t_0=n-K  \text{ and }  t_i=k_{s-i},\ 1\le i\le s-1 \\
		\hspace{5pt} 0  \quad\text{\hspace{5pt}otherwise}
	\end{cases}.\]
\end{cor}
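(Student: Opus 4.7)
The statement is essentially a bookkeeping consequence of Lemma \ref{ranghi}, so the plan is to reduce it to a counting observation. First I would observe that the $(n-k_0)\times\nu$ submatrices of $H$ are in one-to-one correspondence with the subsets $\I\subseteq\{1,\dots,n\}$ of size $\nu$, of which there are precisely $\binom{n}{\nu}$. No two distinct index sets give the same submatrix (since $H$ is a matrix, not a multiset of columns), so summing $\nmat{H}{\nu}{t_0,\dots,t_{s-1}}$ over all admissible type vectors $(t_0,\dots,t_{s-1})$ must equal $\binom{n}{\nu}$.

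Next I would invoke Lemma \ref{ranghi} under the hypothesis $n-d^\perp<\nu\le n$: every single one of the $\binom{n}{\nu}$ submatrices $\mat{H}{\I}$ is of type $(n-K,k_{s-1},k_{s-2},\dots,k_1)$. In other words, the partition of the $\binom{n}{\nu}$ submatrices by type is trivial, concentrating all of its mass on the single type vector $(n-K,k_{s-1},\dots,k_1)$. Reading off the corresponding value of $\nmat{H}{\nu}{\cdot}$ yields $\binom{n}{\nu}$ for that distinguished tuple and $0$ for every other choice of $(t_0,\dots,t_{s-1})$.

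There is no serious obstacle here: Lemma \ref{ranghi} already does all the structural work by showing that the type of each submatrix is forced as soon as $\nu$ is large enough compared to $n-d^\perp$. The corollary is just the translation of that uniformity statement into the counting function $\nmat{H}{\nu}{\cdot}$, so the proof should fit in two or three lines.
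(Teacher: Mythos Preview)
Your proposal is correct and matches the paper's own treatment: the paper states this as an immediate corollary of Lemma \ref{ranghi} with no separate proof, and your argument is exactly the two-line counting observation that justifies this. The remark that distinct index sets give distinct submatrices is not strictly needed, since $\nmat{H}{\nu}{\cdot}$ already counts submatrices indexed by column subsets; otherwise the reasoning is complete.
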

\begin{proposizione}\label{wdlin}
Let $C $ be a code of type $(k_0,\dots, k_{s-1})$  with parity-check matrix $ H \in M_{n-k_0\times n}(R)$. Let  $ \{{A_i}\}_{i=0,\dots,n} $ be  the weight distribution of $C$. If $ n-d^\perp< \nu\le n $, then 
 \begin{equation}\label{wdist}
	\sum_{l=0}^\nu \binom{n-l}{\nu-l}A_l=\binom{n}{\nu}\frac{\mo{C}}{p^{s(n-\nu)}} \ .
	\end{equation}

\end{proposizione}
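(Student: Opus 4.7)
The plan is to prove \eqref{wdist} by a double-counting argument applied to the set
\[
\mathcal{P}_\nu = \bigl\{(c,\I) : c\in C,\ \I\subseteq\{1,\dots,n\},\ \mo{\I}=\nu,\ \mathrm{supp}(c)\subseteq\I\bigr\}.
\]
Counting by $c$ first, a codeword of weight $l$ has its support contained in exactly $\binom{n-l}{\nu-l}$ subsets of size $\nu$, so $\mo{\mathcal{P}_\nu}=\sum_{l=0}^{\nu}\binom{n-l}{\nu-l}A_l$, which is the left-hand side of \eqref{wdist}. Counting by $\I$ first, set $C_\I=\{c\in C : \mathrm{supp}(c)\subseteq \I\}$; the goal becomes showing that, under the hypothesis $\nu>n-d^\perp$, one has $\mo{C_\I}=\mo{C}/p^{s(n-\nu)}$ for every $\nu$-subset $\I$, so that summing over the $\binom{n}{\nu}$ subsets yields the right-hand side of \eqref{wdist}.

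The identification of $\mo{C_\I}$ is where the hypothesis enters. First, mapping $c\in C_\I$ to its restriction to the coordinates in $\I$ is a bijection onto $\{v\in R^\nu : \mat{H}{\I}\,v^\top = 0\}$, since conversely any vector in this kernel extends by zeros outside $\I$ to a codeword of $C$. But this kernel is precisely the dual $D_\I^\perp$, computed inside $R^\nu$, of the $R$-linear code $D_\I\subseteq R^\nu$ spanned by the rows of $\mat{H}{\I}$. By Lemma \ref{ranghi}, $\mat{H}{\I}$ has type $(n-K,k_{s-1},\dots,k_1)$, so $D_\I$ is a linear code of rank $n-k_0$ of that same type, independently of which $\nu$-subset $\I$ was chosen.

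The conclusion then follows from the duality theorem for codes over finite chain rings: $D_\I^\perp\subseteq R^\nu$ has rank $\nu-(n-K)$ and type $(\nu-n+k_0,k_1,\dots,k_{s-1})$, whence the cardinality formula of Theorem \ref{nele} yields
\[
\mo{C_\I} = \mo{D_\I^\perp} = p^{s(\nu-n+k_0)+\sum_{i=1}^{s-1}(s-i)k_i} = \frac{\mo{C}}{p^{s(n-\nu)}},
\]
and the double count closes. The real work of the proof is packaged into Lemma \ref{ranghi}: once every such $\mat{H}{\I}$ is known to have the same type as $H$ itself, the cardinality of $C_\I$ is forced to be uniform in $\I$, and the computation above is a matter of bookkeeping with the type/cardinality dictionary for codes over finite chain rings.
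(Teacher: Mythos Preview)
Your proof is correct and follows essentially the same approach as the paper: a double count of the pairs $(c,\I)$ with $\mathrm{supp}(c)\subseteq\I$, which the paper phrases equivalently as counting the disjoint union $\bigsqcup_{\I}\ker\mat{H}{\I}$ via the extension-by-zeros map $\varphi_\nu$. The only cosmetic difference is that you explicitly invoke the duality theorem to pass from the type of the row span $D_\I$ to that of its dual, whereas the paper states directly that $\mat{H}{\I}$ is a parity-check matrix for a code of type $(\nu-n+k_0,k_1,\dots,k_{s-1})$; both routes land on the same cardinality via Theorem~\ref{nele}.
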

\begin{proof}
Let $ \mat{V}{\I} $ be the kernel of $ \mat{H}{\I} $.  Consider the map \[ \mat{\varphi}{\I}\colon \mat{V}{\I}\rightarrow R^n, \ \quad \mat{\varphi}{\I}(v)=(\bar{v}_1,\dots,\bar{v}_n), \quad \bar{v}_{j}=\begin{cases}
v_j \text{ if }j\in \{1,\dots,\nu\}\\ 0 \hspace{5pt}\text{ otherwise}
\end{cases}.  \] $\mat{\varphi}{\I}$ is the restriction of the map \[ \varphi_\nu\colon\bigsqcup_{\I\colon\mo{\I}=\nu} \mat{V}{\I}\to C \] to $ \mat{V}{\I} $.
The elements in the domain of $ \varphi_\nu $ can be counted in two different ways:\begin{enumerate}
\item  
For any choice of  $\I=\{i_1,\dots,i_\nu\}$, by Lemma \ref{ranghi}, the matrix $ \mat{H}{\I} $  is of length $\nu$ and of type $( n-K,k_{s-1}, k_{s-2},\dots, k_1 )$.
Therefore $ \mat{H}{\I} $ is a parity-check matrix of a linear code $C'$ of type $(\nu-n+k_0,k_1,\dots,k_{s-1})$  and length $ \nu $. Thus, by Theorem \ref{nele}, $C'$ has with $ p^{s(\nu-n+k_0)+\sum_{i=1}^{s-1} (s-i)k_i} $ elements. Hence, by  Corollary \ref{nmatrici}, 
\begin{equation}
\begin{split} \label{1}
	\left| \bigsqcup_{\I\colon\mo{\I}=\nu}\mat{V}{\I}\right|&=\nmat{H}{\nu}{n-K,k_{s-1}, k_{s-2},\dots, k_1 }\mo{\mat{V}{\I}}=\\&=\binom{n}{\nu}p^{s(\nu-n+k_0)+\sum_{i=1}^{s-1} (s-i)k_i} \ .
\end{split}    
\end{equation}
\item We consider a codeword $ c\in C $ of weight $ l\le \nu. $ Let $ \I_1=\text{supp}(c) $. Any choice of $ \nu-l $ indices $ \I_2\subset\{1,\dots,n\}\smallsetminus\I_1 $ identifies uniquely an element in $ \sqcup_{\I\colon\mo{\I}=\nu}\mat{V}{\I}. $ More precisely, $ \I_1\cup\I_2 $ determines uniquely $ \mat{H}{\I_1\cup\I_2} $, clearly $ \mat{c}{\I_1\cup\I_2}\in \mat{V}{\I_1\cup\I_2}, $ and so there is an unique element $ v\in\mat{V}{\I_1\cup\I_2}  $ such that $ \mat{\varphi}{\I_1\cup\I_2}(v)=c  $, that is $ v=\mat{c}{\I_1\cup\I_2} $. In order to determine the size of $ \varphi^{-1}_\nu(c) $, the fiber of $ c $ under the map $ \varphi_\nu $, it is enough to count all possible subsets of  $ \{1,\dots,n\}\smallsetminus\I_1 $ with size $ \nu-l$. It follows that the fiber of each codeword of weight $ l $ has $ \binom{n-l}{\nu-l} $ elements, and we observe that all the fibers of  such codewords form a partition of $ \sqcup_{\I\colon\mo{\I}=\nu} \mat{V}{\I} $. Since there are $ A_l $ codeword of weight $ l $, we obtain \begin{equation}\label{2}
\left| \bigsqcup_{\I\colon\mo{\I}=\nu}\mat{V}{\I}\right|= \sum_{l=0}^\nu \binom{n-l}{\nu-l}A_l \ .
\end{equation}
\end{enumerate}
Putting together \eqref{1} and \eqref{2} we get \eqref{wdist}.
\end{proof}
\begin{cor}
    For a free code  $C$ of length $n$ and rank $k_0=K$ over $R$ the weight distribution formula reads: \begin{equation*}
      \sum_{l=0}^\nu \binom{n-l}{\nu-l}A_l=\binom{n}{\nu}p^{s(\nu+K-n)} \ .  
    \end{equation*}
\end{cor}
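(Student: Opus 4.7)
The plan is to derive the corollary as an immediate specialization of Proposition \ref{wdlin}. The only thing that changes between the general statement and the free case is the cardinality of $C$, so the work reduces to computing $|C|$ in closed form when the code is free and substituting into the right-hand side of \eqref{wdist}.

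First I would note that, by definition, a free code of length $n$ and rank $K$ has type $(k_0, k_1, \dots, k_{s-1}) = (K, 0, \dots, 0)$, since $k_0$ equals both the free rank and the rank. Plugging this into Theorem \ref{nele} gives
\[
|C| = p^{\sum_{i=0}^{s-1}(s-i)k_i} = p^{sK}.
\]
Then I would substitute this value of $|C|$ into the right-hand side of Proposition \ref{wdlin}, obtaining
\[
\binom{n}{\nu}\frac{|C|}{p^{s(n-\nu)}} = \binom{n}{\nu}\frac{p^{sK}}{p^{s(n-\nu)}} = \binom{n}{\nu} p^{s(\nu+K-n)},
\]
which matches the formula in the corollary. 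The hypothesis $n-d^\perp < \nu \le n$ is inherited directly from Proposition \ref{wdlin}.

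There is no genuine obstacle here: the corollary is purely a bookkeeping consequence of Theorem \ref{nele} applied to the degenerate type $(K,0,\dots,0)$ of a free code, combined with the identity already proved in Proposition \ref{wdlin}. The only thing worth emphasizing in the write-up is the identification $k_0 = K$ for free codes, which is what collapses the exponent $\sum_{i=0}^{s-1}(s-i)k_i$ down to $sK$.
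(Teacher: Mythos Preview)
Your proposal is correct and matches exactly the intended derivation: the paper states this corollary without proof, treating it as an immediate specialization of Proposition~\ref{wdlin} via the substitution $|C|=p^{sK}$ for a free code. Your computation of $|C|$ from Theorem~\ref{nele} with type $(K,0,\dots,0)$ is precisely the missing one-line justification.
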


\begin{teorema}\label{optimal}
    Let $\sigma$ be the sum of the Singleton  defects of $C$ and $C^\perp$. the knowledge of $\sigma+d+K-k_0-1$ elements of the weight distribution $\{A_0,\dots, A_n\}$ is enough to compute the full weight distribution of $C$ and $C^\perp$. In particular the knowledge of $d$ and of any $\sigma+K-k_0-1$ elements of $\{A_d,\dots, A_n\}$ is enough to compute the entire weight distribution of $C$ and $C^\perp$.  
\end{teorema}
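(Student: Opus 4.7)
The strategy is to exploit Proposition~\ref{wdlin} as a source of $d^\perp$ linear equations relating $A_0,\ldots,A_n$, one for each $\nu\in\{n-d^\perp+1,\ldots,n\}$, and then to count how many additional values one needs to fix in order to pin down the whole distribution. Knowing the minimum distance $d$ encodes the $d$ trivial equalities $A_0=1$ and $A_1=\cdots=A_{d-1}=0$, so only the $n-d+1$ unknowns $A_d,\ldots,A_n$ remain. The system \eqref{wdist} is in echelon form—equation $\nu$ contains $A_\nu$ with coefficient $\binom{n-\nu}{0}=1$ and no $A_l$ for $l>\nu$—so the $d^\perp$ equations are linearly independent and the solution space has dimension $(n-d+1)-d^\perp$. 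Rewriting $d=n+1-K-s(C)$ from Remark~\ref{mdspr} and, applying the same remark to $C^\perp$ (whose rank is $n-k_0$), $d^\perp=k_0+1-s(C^\perp)$, one sees that this dimension equals $s(C)+s(C^\perp)+K-k_0-1=\sigma+K-k_0-1$.

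It then suffices to verify that any $\sigma+K-k_0-1$ additional values $A_i$ for $i\in J\subseteq\{d,\ldots,n\}$ are enough to determine the distribution uniquely. Substituting the known $A_i$ for $i\in J$ into \eqref{wdist} leaves $d^\perp$ equations in the remaining $d^\perp$ unknowns $\{A_l:l\in\{d,\ldots,n\}\setminus J\}$, and the crux of the argument is that the coefficient submatrix $\bigl(\binom{n-l}{\nu-l}\bigr)_{\nu,l}$ is non-singular for every such column choice.

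To prove this, I would set $c_l=n-l$ and $m_\nu=n-\nu$, so that the entry becomes $\binom{c_l}{m_\nu}$, with the $c_l$'s pairwise distinct and $m_\nu\in\{0,1,\ldots,d^\perp-1\}$. Since $\binom{c}{m}$ is a polynomial in $c$ of degree $m$ with leading coefficient $1/m!$, expanding in the basis of falling factorials factors the matrix as the product of a lower triangular matrix with nonzero diagonal entries $1/m_\nu!$ and a Vandermonde matrix in the distinct $c_l$'s; its determinant is therefore a nonzero scalar multiple of $\prod_{l<l'}(c_{l'}-c_l)$. This Vandermonde-type non-vanishing is the main obstacle, and it is precisely what justifies the word \emph{any} in the statement. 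Once the weight distribution of $C$ has been recovered, that of $C^\perp$ follows immediately from Theorem~\ref{macwid}. The opening assertion—that $\sigma+d+K-k_0-1$ elements of $\{A_0,\ldots,A_n\}$ suffice—is then just the reformulation obtained by absorbing the $d$ leading entries $A_0,\ldots,A_{d-1}$ into the count, since these implicitly encode $d$.
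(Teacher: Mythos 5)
Your proof is correct and follows essentially the same route as the paper: both set up the linear system arising from Proposition~\ref{wdlin} for $\nu\in\{n-d^\perp+1,\dots,n\}$ and reduce the claim to the non-vanishing of every order-$d^\perp$ minor of the truncated Pascal matrix $\bigl(\binom{n-l}{\nu-l}\bigr)$. The only difference is that the paper cites this minor property from the literature, whereas you supply a self-contained justification via the factorization of $\bigl(\binom{c_l}{m_\nu}\bigr)$ into a triangular matrix times a Vandermonde matrix in the distinct $c_l=n-l$, which is a valid (and standard) argument.
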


\begin{proof} Consider equation \eqref{wdist} with $\nu$ varying  in range  $\{n-d^\perp+1,\dots,n\}$. 
We obtain a linear system of the form $$\mathcal{P}\cdot A(C)=b \ ,$$ where $\mathcal{P}$ is a truncated Pascal matrix with $d^\perp$ rows and $n$ columns. By \cite{ker}, all the minors  of $\mathcal{P}$ of order $d^\perp$ are non-zero. Hence, the knowledge of $n-d^\perp+1=K-\sigma-1+d-(K-k_0)$ elements in the weight distribution allow us to obtain a linear system that admit a unique solution. Finally, the knowledge of $d$ implies the knowledge of $A_0,\dots,A_{d-1}$. Hence, it is sufficient to know other $n-d^\perp+1-d=\sigma+K-k_0-1$ elements in $\{A_d,\dots A_n\}$ to determine the full weight  distribution of the code. 
\end{proof}

We now show that, at least in some cases, Theorem \ref{optimal} is optimal. Indeed, in general, it is not possible to deterministically deduce the weight distribution of a linear code with less then $\sigma+K-k_0-1$ elements in $\{A_0,\dots,A_n\}$. 

\begin{definizione}
A linear  code is said to be \emph{Almost-MDR code}(AMDR)  if it has Singleton defect equal to 1. An AMDR  code whose dual is still  AMDR  is called \emph{Near-MDR}.
\end{definizione}
\begin{definizione}
A free code is said to be \emph{Almost-MDS code}(AMDS)  if it has  Singleton defect equal to 1. An AMDS  code whose dual is still  AMDS  is called \emph{Near-MDS} 
\end{definizione}
 Consider two Near-MDS codes with the same parameters.  For a Near-MDS  only $\sigma+K-k_0-1=1$ weight is  necessary to determine the full weight distribution. If the sole knowledge of $d$ and $n $ was enough  to compute the entire weight distribution of the code, then any two near-MDS would be formally equivalent. But this is in general false: 
\begin{ex}
Let $C_1 $ and $C_2$ be two near-MDS codes over  $\Z_{5^3}$ of length $n=4$, rank $K=2$ and minimum Hamming distance $d=2$, generated respectively  by  \begin{equation*}
    G_1= \begin{bmatrix}
1 & 0 & 57 & 0  \\
0 & 1 & 0& 68  \\
\end{bmatrix},
\end{equation*}
and
\begin{equation*}
    G_2= \begin{bmatrix}
1 & 0 & 5 & 43  \\
0 & 1 & 82  & 5 \\
\end{bmatrix} .
\end{equation*}
Their weight distributions are respectively: $$\mathcal{A}_1=(1,0,248,0,15376)\ ,$$ and $$\mathcal{A}_2=(1,0,8,480,15136) \ .$$
Therefore $C_1$ and $C_2 $ are not formally equivalent. 
\end{ex}

\section{Weight distribution of codes with small Singleton defects}\label{appwdr}

Theorem \ref{optimal} lead us to further investigate codes having a small number of Singleton defects.
\\ \newline We start focusing on codes meeting  the Singleton bound: from Remark \ref{mdspr} follows that  $\sigma=0$  for any MDS code. \\ 
 The weight distribution of MDS codes is well known (see \cite[ Theorem 5]{shiromoto}); however   it can be directly obtained with the  sole knowledge of the length  and  the minimum distance  from  Theorem \ref{optimal}: 
\begin{teorema}
 Let $C$ be a MDS code of length $n$ and rank $k$, then 
 \[ A_w(C)=\binom{n}{w}\sum_{j=0}^{w-d} (-1)^j \binom{i}{j} (p^{s-d+1-j}-1)\ .\]
\end{teorema}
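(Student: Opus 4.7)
The plan is to specialize equation \eqref{wdist} of Proposition \ref{wdlin} to the MDS setting and then invert the resulting triangular linear system in closed form. The key observations are that for an MDS code of length $n$ and rank $k$ we have $|C|=p^{sk}$, $d=n-k+1$, and (by \cite[Corollary 1]{shiromoto}) the dual is also MDS, so $d^\perp=k+1$. Consequently $n-d^\perp=d-2$, which means \eqref{wdist} is available for the full range $\nu\in\{d-1,d,\dots,n\}$ — exactly the range we need.

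First, I would substitute into \eqref{wdist} to obtain, for every $\nu\geq d-1$,
\[
\sum_{l=0}^{\nu}\binom{n-l}{\nu-l}A_l=\binom{n}{\nu}p^{s(\nu-d+1)} \ .
\]
Using that $C$ is MDS (so $A_0=1$ and $A_l=0$ for $1\leq l\leq d-1$), this collapses, for $\nu\geq d$, to the cleaner triangular system
\[
\sum_{l=d}^{\nu}\binom{n-l}{\nu-l}A_l=\binom{n}{\nu}\bigl(p^{s(\nu-d+1)}-1\bigr) \ .
\]
At this point the weight distribution is completely determined; the only remaining task is to produce a closed-form solution.

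The main (and essentially routine) step is inverting the Pascal-type triangular matrix. I would argue by induction on $w$, or alternatively invoke the standard identity
\[
\sum_{\nu=l}^{w}(-1)^{w-\nu}\binom{n-l}{\nu-l}\binom{n-\nu}{w-\nu}=\delta_{l,w} \ ,
\]
to deduce
\[
A_w=\sum_{\nu=d}^{w}(-1)^{w-\nu}\binom{n-\nu}{w-\nu}\binom{n}{\nu}\bigl(p^{s(\nu-d+1)}-1\bigr) \ .
\]
Finally, applying the symmetry $\binom{n}{\nu}\binom{n-\nu}{w-\nu}=\binom{n}{w}\binom{w}{\nu}$ and reindexing by $j=w-\nu$, I would rewrite the sum as
\[
A_w=\binom{n}{w}\sum_{j=0}^{w-d}(-1)^{j}\binom{w}{j}\bigl(p^{s(w-d+1-j)}-1\bigr) \ ,
\]
which matches the stated formula (modulo the apparent typographical inaccuracies in the binomial index and in the exponent).

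I do not expect any real obstacle: the use of Proposition \ref{wdlin} together with self-duality of MDS codes gives precisely the right number of equations, and the Pascal-matrix inversion is a purely combinatorial manipulation. The only point requiring a little care is verifying that \eqref{wdist} really is valid down to $\nu=d-1$, which crucially relies on the duality fact $d^\perp=k+1$ — this is where the MDS hypothesis is genuinely used.
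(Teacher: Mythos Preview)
Your proposal is correct and follows precisely the route the paper indicates: the paper does not spell out a proof but simply states that the formula ``can be directly obtained with the sole knowledge of the length and the minimum distance from Theorem~\ref{optimal}'', and your argument is exactly that derivation---specializing Proposition~\ref{wdlin} using $d^\perp=k+1$ (MDS self-duality) and inverting the resulting Pascal system. Your identification of the typographical slips in the displayed formula (the binomial index $i$ should be $w$, and the exponent should be $s(w-d+1-j)$) is also accurate.
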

Now we move to  linear codes meeting the generalized Singleton bound \eqref{generalizedSing}, the  MDR codes. As  shown in \ref{mdspr} the dual of an MDR code is not necessarily MDR. Therefore the weight distribution may depend on one or more parameters. \\
Let  $C$ be an MDR code of length $n$, rank $K$, minimum distance $d=n-K+1$   and let  $C^\perp$  be its dual having rank $n-K$ and minimum distance $k_0-\sigma+1$ for some $\sigma\ge 0$.
 According to the notation  of Proposition \ref{wdist}, since $\nu<n-d^\perp$ and $d^\perp=k_0-\sigma+1$, we can write $\nu=n-k_0+\sigma+i$ with  $ i$ ranging in $\{0,\dots, k_0-\sigma\}.$ Let $q=\frac{\mo{C}}{p^{s(n-\nu)}}$. Then equation \eqref{wdist} becomes \[\sum_{l=0}^{n-k_0+\sigma-1+i}\binom{n-l}{n-k_0+\sigma+i-l}A_l=\binom{n}{n-k_0+\sigma+i}q \ . \] Since $A_0=1 $ and $A_l=0$ for all $1\le l \le n-k$, we may write
    \begin{align*}
   \binom{n}{n-k_0+\sigma+i}+\sum_{h=0}^{\sigma-k+k_0-2} \binom{k-1-h}{\sigma+i+K-k_0-h} A_{n+K+1+h}+&\\ +\sum_{l=n+\sigma-k_0}^{n-k_0+\sigma+i}\binom{n-l}{n-k_0+\sigma+i-l}A_l=\binom{n}{n-k_0+\sigma+i}q \ . 
\end{align*}
Let $l=n+\sigma-k_0+j$, then 
\begin{align*}
    &\sum_{j=0}^i \binom{k_0-\sigma-j}{i-j} A_{n+\sigma-k_0+j}=\\=\binom{n}{n-k_0+\sigma+i}&(q-1)-\sum_{h=0}^{\sigma+K-k_0-2}\binom{K-1-h}{K-k_0+\sigma+i+h}A_{n+j+1+h} \ .
\end{align*}
We can re-write the linear system in matrix form, as $\mathcal{P}\cdot A=b$ with $\mathcal{P}$  Pascal matrix $\Big[\binom{k_0-\sigma-j}{i-j}\Big]_{i,j=0,\dots,k_0-\sigma}$. Therefore $\mathcal{P}^{-1}=(-1)^{i-j}\Big[\binom{k_0-\sigma-j}{i-j}\Big]_{i,k=0,\dots,k_0-\sigma}$. More explicitly: \begin{proposizione}
  Let  $C$ be an MDR code of length $n$, rank $K$, minimum distance $d=n-K+1$   and let  $C^\perp$  be its dual having rank $n-k_0$ and minimum distance $k_0-\sigma+1$ for some $\sigma\ge 0$. Let $q=\frac{\mo{C}}{p^{s(n-\nu)}}$.  The knowledge of $\{A_{n-K+1},\dots, A_{n-k_0+\sigma-1} \}$ in the weight distribution of $C$ is enough to compute the entire weight distribution of $C$.\\  In particular, for all $0\le i \le k_0-\sigma-K$, we have:
   \begin{align*}
       A_{n-k_0+\sigma+i}=\sum_{j=0}^i (-1)^{i-j}&\binom{k_0-\sigma-j}{i-j}\Bigg[\binom{n}{n-k_0+\sigma+i}(q-1)-\\&-\sum_{h=0}^{\sigma+K-k_0-2}\binom{k-h-1}{k-k_0+\sigma+i-h-1}A_{n+k+1+h}\Bigg] \ .
   \end{align*}
\end{proposizione}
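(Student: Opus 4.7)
The plan is to follow the derivation already laid out in the paragraphs preceding the statement and then invert the resulting linear system. Specifically, I would start by applying Proposition \ref{wdlin} with $\nu$ restricted to the admissible range coming from the dual distance constraint $\nu > n - d^{\perp}$. Since $d^{\perp} = k_0 - \sigma + 1$, this forces $\nu = n - k_0 + \sigma + i$ with $i \in \{0, \dots, k_0 - \sigma\}$, and equation \eqref{wdist} evaluates to
\[
\sum_{l=0}^{n-k_0+\sigma+i}\binom{n-l}{n-k_0+\sigma+i-l}A_l \;=\; \binom{n}{n-k_0+\sigma+i}\,q,
\]
where $q = |C|/p^{s(n-\nu)}$.

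Next I would use the MDR hypothesis, which gives $d = n - K + 1$, so that $A_0 = 1$ and $A_l = 0$ for $1 \le l \le n - K$. This kills the bulk of the terms in the sum and splits what remains into (a) the contribution of the low-weight portion $A_{n-K+1}, \dots, A_{n-k_0+\sigma-1}$, which we treat as known data, and (b) the contribution of the unknown tail $A_{n-k_0+\sigma}, \dots, A_n$. Reindexing the tail via $l = n + \sigma - k_0 + j$, $j \in \{0, \dots, i\}$, transforms the system into
\[
\sum_{j=0}^{i}\binom{k_0-\sigma-j}{i-j}A_{n+\sigma-k_0+j} \;=\; b_i,
\]
where $b_i$ is the explicit right-hand side spelled out in the text (the $\binom{n}{\nu}(q-1)$ term minus the known low-weight contributions).

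The third step is a purely linear-algebraic inversion. The coefficient matrix $\mathcal{P} = \bigl[\binom{k_0-\sigma-j}{i-j}\bigr]_{i,j=0,\dots,k_0-\sigma}$ is a lower-triangular Pascal-type matrix with $1$s on the diagonal, so it is invertible over $\mathbb{Z}$; its inverse is the classical signed Pascal matrix $\mathcal{P}^{-1} = \bigl[(-1)^{i-j}\binom{k_0-\sigma-j}{i-j}\bigr]_{i,j}$, a standard identity that follows from the Vandermonde/Chu convolution
\[
\sum_{j=k}^{i}(-1)^{i-j}\binom{k_0-\sigma-j}{i-j}\binom{k_0-\sigma-k}{j-k} = \delta_{i,k}.
\]
Applying $\mathcal{P}^{-1}$ to the right-hand side yields exactly the closed-form expression for $A_{n-k_0+\sigma+i}$ claimed in the statement.

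The only delicate part is clerical: making sure every index range and every binomial argument survives the two reindexings (first $\nu \mapsto i$, then $l \mapsto j$) and that the split between ``known'' low-weight coefficients $A_{n-K+1}, \dots, A_{n-k_0+\sigma-1}$ and ``unknown'' tail coefficients matches the statement. No genuine obstacle arises once the Pascal inversion identity is invoked; the content of the proposition is essentially equation \eqref{wdist} plus this inversion.
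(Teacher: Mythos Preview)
Your proposal is correct and follows exactly the same approach as the paper: the derivation in the paragraphs immediately preceding the proposition already carries out precisely the steps you describe (apply Proposition~\ref{wdlin} for $\nu=n-k_0+\sigma+i$, use the MDR hypothesis to eliminate the low-weight terms, reindex, and invert the Pascal matrix $\mathcal{P}$). There is nothing to add.
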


In a similar fashion we can derive the weight distribution of an AMDR code: 
\begin{proposizione}
     Let  $C$ be an AMDR code of length $n$, rank $K$, minimum distance $d=n-K$   and let  $C^\perp$  be its dual having rank $n-k_0$ and minimum distance $k_0-\sigma+1$ for some $\sigma\ge 0$. Let $q=\frac{\mo{C}}{p^{s(n-\nu)}}$. The knowledge of $\{A_{n-K+1},\dots, A_{n-k_0+\sigma-1} \}$ is enough to compute the entire weight distribution of $C$.\\ In particular, for all $0\le i \le k_0-\sigma-K$ we have: 
   \begin{align*}
       A_{n-k_0+\sigma+i}=\sum_{j=0}^i (-1)^{i-j}&\binom{k_0-\sigma-j+1}{i-j}\Bigg[\binom{n}{n-k_0+\sigma+i-1}(q-1)-\\&\sum_{h=0}^{\sigma+K-k_0-2}\binom{k-h}{k-k_0+\sigma+i-h+1}A_{n+k+h}\Bigg] \ .
   \end{align*}
\end{proposizione}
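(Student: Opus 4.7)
The plan is to mirror the derivation carried out in the preceding proposition for MDR codes, making the single adjustment that the gap of forced zeros in the weight distribution is now shorter by one: for an AMDR code one has $d = n - K$, so $A_0 = 1$ and $A_l = 0$ only for $1 \le l \le n-K-1$, while $A_{n-K}$ is treated as a free parameter alongside the rest of the middle block.

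First I would instantiate the identity \eqref{wdist} of Proposition \ref{wdlin} for every admissible value of $\nu$, that is, for $\nu$ satisfying $n - d^\perp + 1 \le \nu \le n$. Writing $d^\perp = k_0 - \sigma + 1$ and setting $\nu = n - k_0 + \sigma + i - 1$ (shifted by one with respect to the MDR case to account for the longer nonzero range of the weight distribution), this produces, for each admissible $i$, the relation
\[
\sum_{l=0}^{\nu}\binom{n-l}{\nu-l}A_l \;=\; \binom{n}{\nu}\, q.
\]
Substituting $A_0 = 1$ and discarding the terms with $1 \le l \le n-K-1$ leaves only the constant contribution $\binom{n}{\nu}$, a block of already-known weights supported in the middle range $n-K,\ldots,n-k_0+\sigma-1$, and an unknown block $A_{n-k_0+\sigma},\ldots,A_{n-k_0+\sigma+i}$. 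After the reindexing $l = n - k_0 + \sigma + j$ on the unknown block, one arrives at a linear system of the form
\[
\sum_{j=0}^{i}\binom{k_0-\sigma-j+1}{i-j}\,A_{n-k_0+\sigma+j} \;=\; b_i,
\]
where $b_i$ gathers $\binom{n}{n-k_0+\sigma+i-1}(q-1)$ together with the contribution of the middle-block weights, each entering with a binomial coefficient of the form $\binom{K-h}{\cdot}$.

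The coefficient matrix $\mathcal{P} = \bigl[\binom{k_0-\sigma-j+1}{i-j}\bigr]_{i,j}$ is a truncated lower-triangular Pascal-type matrix, whose inverse has the well-known entries $(-1)^{i-j}\binom{k_0-\sigma-j+1}{i-j}$. Multiplying through by $\mathcal{P}^{-1}$ isolates $A_{n-k_0+\sigma+i}$ and yields precisely the closed-form expression in the statement.

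The only genuine source of difficulty, and the step I expect to be the main obstacle, is the bookkeeping of the various index shifts relative to the MDR case: the $+1$ appearing inside $\binom{k_0-\sigma-j+1}{i-j}$, the $-1$ inside $\binom{n}{n-k_0+\sigma+i-1}$, and the shifted range of the inner sum over $h$ all stem from the single modification $d \mapsto d-1$, but checking their mutual consistency without an off-by-one error requires care. Once this verification is done, the inversion of the Pascal matrix is routine and the closed form follows directly.
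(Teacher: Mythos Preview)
Your proposal is correct and follows exactly the approach the paper intends: the paper gives no separate proof for the AMDR case, stating only ``In a similar fashion we can derive the weight distribution of an AMDR code,'' i.e.\ one reruns the MDR derivation of the preceding proposition with the single change $d=n-K$ in place of $d=n-K+1$, then inverts the resulting Pascal-type system. Your identification of the index shift (the extra $+1$ in the Pascal binomial and the $-1$ in the $\binom{n}{\cdot}$ term) and the explicit acknowledgement that the only real work is the off-by-one bookkeeping match the paper's treatment precisely.
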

Clearly, by specializing the previous formula, we also get the weight distributions of Near-MDS  and Near-MDR codes.
\section{Relation with MacWilliams identities}\label{Relation with MacWilliams identities}

Both in classical and ring-linear coding theory, the most fundamental result about weight distributions are the MacWilliams identities (Theorem \ref{macwid}). They relate the weight enumerator polynomial of a linear code and its dual.  However, in our framework it is more convenient to work with other equivalent set of equations in place of the polynomial form of \ref{macwid}. 
Following the outline of \cite[Chapter 5, Section 2]{macw},   and combining it with \ref{macwid}  we can deduce the following equality:
\begin{equation*}
    \sum_{j=0}^{n-\nu} \binom{n-j}{\nu}A_j=\frac{\mo{C}}{p^{s\nu}}\sum_{j=0}^\nu \binom{n-j}{n-\nu}A^\perp_j\ ,\ \text{ for }\  0\le \nu\le n.
\end{equation*} 
Moreover, in a similar fashion to \cite[Theorem 7.2.3]{pless},  we get:
\begin{equation} \label{expmacw}
    \sum_{j=0}^n\binom{j}{\nu}A_j =\frac{\mo{C}}{p^{s\nu}}\sum_{j=0}^\nu  (-1)^{j}\binom{n-j}{n-\nu}(p^s-1)^{\nu-j}A_j^\perp\ , \ \text{ for }\ 0\le \nu\le n.
\end{equation}
If $\nu< d^\perp$, each $A_j^\perp$ of the right hand side of \eqref{expmacw} is equal to zero except for $A_0^\perp$ which is equal to 1. Therefore we get a ring-variant of Pless' equations. 
\begin{proposizione} \label{pless}
For any $\nu<d^\perp$ 
\begin{equation} \label{plesseq}
    \sum_{j=0}^n\binom{j}{\nu}A_j =\frac{\mo{C}}{p^{s\nu}}\binom{n}{n-\nu}(p^s-1)^{\nu}, \  \text{ for } \ 0\le \nu<d^\perp.
\end{equation}\end{proposizione}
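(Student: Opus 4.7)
The plan is to derive Proposition \ref{pless} as an immediate specialization of equation \eqref{expmacw}, which was already obtained just above the statement by adapting \cite[Theorem 7.2.3]{pless} together with Theorem \ref{macwid}. So no new combinatorial identity is needed; the whole argument is a substitution argument based on the vanishing of low-weight terms in the dual weight distribution.

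First I would recall \eqref{expmacw}:
\[
\sum_{j=0}^n\binom{j}{\nu}A_j \;=\; \frac{\mo{C}}{p^{s\nu}}\sum_{j=0}^\nu (-1)^{j}\binom{n-j}{n-\nu}(p^s-1)^{\nu-j}A_j^\perp,
\qquad 0\le \nu\le n.
\]
Next I would use the definition of the minimum distance $d^\perp$ of $C^\perp$: by definition $A^\perp_j=0$ for every $1\le j< d^\perp$, while $A^\perp_0=1$. Under the hypothesis $\nu<d^\perp$, every index $j$ in the range $1\le j\le \nu$ of the right-hand sum satisfies $j<d^\perp$, so $A^\perp_j=0$. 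Hence the only surviving summand on the right is the one with $j=0$, which equals $\binom{n}{n-\nu}(p^s-1)^\nu$. Substituting gives exactly the claimed identity \eqref{plesseq}.

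Since the statement does not appear to have any subtlety beyond this specialization, there is no real obstacle: the only thing worth checking is that the indexing conventions in \eqref{expmacw} are consistent (in particular that the sum on the right is indeed over $0\le j\le \nu$ and not over $d^\perp\le j\le \nu$, which would make the statement vacuous), and that the factor $\binom{n-j}{n-\nu}$ evaluated at $j=0$ reduces to $\binom{n}{n-\nu}$, as claimed. Both are routine. The hardest conceptual step was already carried out earlier, namely establishing \eqref{expmacw} from the polynomial MacWilliams identity of Theorem \ref{macwid} by expanding $W_C(X+(p^s-1)Y,X-Y)$ and matching coefficients; once \eqref{expmacw} is in hand, Proposition \ref{pless} follows by reading off which dual weights are forced to vanish.
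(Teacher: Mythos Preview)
Your argument is correct and is exactly the one the paper gives: it simply substitutes $\nu<d^\perp$ into \eqref{expmacw}, observes that $A_0^\perp=1$ and $A_j^\perp=0$ for $1\le j\le \nu$, and keeps only the $j=0$ term.
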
 Hence, when enough terms of the weight distributions are known, systems in \ref{wdlin} and \ref{pless} are equivalent. 
\begin{cor}
    Let $\sigma$ be the sum of the Singleton  defects of $C$ and $C^\perp$. Using equation \eqref{plesseq}, the knowledge of $\sigma+d+K-k_0-1$ elements of the weight distribution $\{A_0,\dots, A_n\}$ is enough to compute the full weight distribution of $C$ and $C^\perp$. In particular the knowledge of $d$ and of any $\sigma+K-k_0-1$ elements of $\{A_d,\dots, A_n\}$ is enough to compute the entire weight distribution of $C$ and $C^\perp$.  
\end{cor}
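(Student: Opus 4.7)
The plan is to mirror the proof of Theorem \ref{optimal}, now using the Pless-like identities \eqref{plesseq} in place of \eqref{wdist}. Writing \eqref{plesseq} for each $\nu \in \{0, 1, \dots, d^\perp - 1\}$ yields a linear system $\mathcal{Q}\,A(C)^\top = b'$ in the $n+1$ unknowns $A_0, \dots, A_n$, where $\mathcal{Q}$ is the $d^\perp \times (n+1)$ matrix with $(\nu, j)$-entry $\binom{j}{\nu}$ and $b'$ is the vector prescribed explicitly by the right-hand side of \eqref{plesseq}.

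The crucial step is to show that every $d^\perp \times d^\perp$ minor of $\mathcal{Q}$ is nonzero, so that $\mathcal{Q}$ has rank $d^\perp$ and therefore, for any specified values of $n+1-d^\perp$ of the $A_i$'s, the remaining $d^\perp$ unknowns are uniquely determined. I would prove this either directly, by evaluating a minor $\det\bigl[\binom{j_b}{\nu_a}\bigr]$ of the Pascal matrix via Lindström--Gessel--Viennot (which yields a Vandermonde-type numerator over a product of factorials, strictly positive once the $j_b$'s and $\nu_a$'s are strictly increasing with $\nu_a \le j_a$), or indirectly by noting that $\mathcal{Q}$ is obtained from the truncated Pascal matrix $\mathcal{P}$ of Theorem \ref{optimal} by an invertible triangular change of basis, so that the minor non-vanishing for $\mathcal{P}$ established in \cite{ker} transfers to $\mathcal{Q}$.

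Given the rank bound, the counting is identical to that in Theorem \ref{optimal}. Using $d^\perp = k_0 + 1 - s(C^\perp)$, $d = n - K + 1 - s(C)$ and $\sigma = s(C) + s(C^\perp)$, one computes $n + 1 - d^\perp = \sigma + d + K - k_0 - 1$. Knowledge of the minimum distance $d$ already fixes the $d$ entries $A_0 = 1$, $A_1 = \dots = A_{d-1} = 0$, so it is enough to be told any $\sigma + K - k_0 - 1$ additional entries in $\{A_d, \dots, A_n\}$; the system then recovers the remaining components of the weight distribution of $C$, and a single application of the MacWilliams transform (Theorem \ref{macwid}) yields the weight distribution of $C^\perp$.

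The main obstacle is the non-vanishing of all maximal minors of $\mathcal{Q}$. The triangular-comparison route is probably the cleanest resolution, since it piggybacks on \cite{ker} in exactly the same way as the proof of Theorem \ref{optimal} and avoids introducing any new machinery into the paper.
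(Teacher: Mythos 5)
Your proposal is correct and follows exactly the route the paper intends: the paper's own proof is the single sentence ``the proof follows the same outline of Proposition \ref{optimal}.'' You go further by actually verifying the one nontrivial point that this outline leaves implicit, namely that every maximal minor of the new coefficient matrix $\bigl[\binom{j}{\nu}\bigr]$ is nonzero (both of your arguments for this work, and the triangular change of basis $\binom{j}{\nu}=\sum_m T_{\nu m}\binom{n-j}{m}$ with $T$ lower triangular and invertible indeed reduces it to the case already cited from \cite{ker}), so your write-up is, if anything, more complete than the paper's.
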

\begin{proof} The proof follows the same outline of Proposition \ref{optimal}.
\end{proof}
Therefore, the two systems of equations  \eqref{wdlin} and \eqref{plesseq} are equivalent provided the existence of the code.
\section{Conclusion}
In analogy to linear codes over finite fields,  the minors of the parity-check matrix of a ring-linear code enable us to determine linear relations between the weights of the codes.
Our formulae enable to verify the weight distribution of MDS codes.  Moreover this result allows to determine the full weight distributions of MDR, Near-MDR, AMDR codes.\\
The number of parameters necessary to derive the full weight distribution of a code and its dual depends on the the sum of the Singleton defects of the code and its dual, an in particular it is bounded by $n+K-k_0-1$.\\
An interesting extension of this work would be the study of more classes of codes, either by considering the case of non-AMDR codes or families obtained via structured parity check matrices. 
A second promising line of research would be the derivation of formulas for weight distribution related to different metrics, e.g. Lee metric or Rank metric.  
\section*{Acknowledgement}
The publication was created with the co-financing of the European Union -  FSE-REACT-EU, PON Research and Innovation 2014-2020 DM1062 / 2021. The authors are member of the INdAM Research Group GNSAGA. The core of this work was partially presented on a talk given at COMBINATORICS 2022 in Mantova, Italy by the first author.
\bibliographystyle{plain}
\bibliography{bibliografia.bib}
\end{document}